\documentclass[aps,pra,10pt,twocolumn,superscriptaddress,floatfix,nofootinbib,showpacs,longbibliography,tikz,border=6mm]{revtex4-2}

\usepackage{amsmath}
\usepackage{tikz}

\usepackage[utf8]{inputenc}
\usepackage[T1]{fontenc}     
\usepackage[british]{babel}  
\usepackage[sc,osf]{mathpazo}
\usepackage{times}
\usepackage[table]{xcolor}
\usepackage[scaled=0.86]{berasans}  
\usepackage[colorlinks=true, citecolor=purple, urlcolor=blue]{hyperref}
\usepackage{comment}
\makeatletter
\newcommand{\setword}[2]{%
  \phantomsection
  #1\def\@currentlabel{\unexpanded{#1}}\label{#2}%
}
\makeatother
\usepackage{comment}
\usepackage{graphicx} 
\usepackage[babel]{microtype}  
\usepackage{amsmath,amssymb,amsthm,bm,amsfonts,mathrsfs,bbm} 

\usepackage{xspace}  
\usepackage{pgf,tikz}
\usepackage{xcolor}
\usepackage{multirow}
\usepackage{array}
\usepackage{bigstrut}
\usepackage{braket}
\usepackage{color}
\usepackage{natbib}
\usepackage{multirow}
\usepackage{mathtools}
\usepackage{float}
\usepackage[caption = false]{subfig}
\usepackage{xcolor,colortbl}
\usepackage{color}
\newcommand{\Tr}{\operatorname{Tr}}

\newcommand{\be}{\begin{equation}}
\newcommand{\ee}{\end{equation}}
\newcommand{\ba}{\begin{eqnarray}}
\newcommand{\ea}{\end{eqnarray}}

\newtheorem{theorem}{Theorem}
\newtheorem{corollary}{Corollary}
\newtheorem{definition}{Definition}

\newtheorem{lemma}{Lemma}

\def\>{\rangle}
\def\<{\langle}

\usepackage{centernot}
\usepackage{subfig}
\usepackage{filecontents}

\providecommand{\ket}[1]{| #1{\rangle}}
\providecommand{\bra}[1]{\langle #1|}

\usepackage[table]{xcolor}

\begin{document}

\title{The Generalised Causality Principle}

\author{Sahil Gopalkrishna Naik}
\affiliation{Department of Computer Science and Operations Research, Université de Montréal, Montréal, Québec, H3T 1J4, Canada}

\begin{abstract}
The no-signalling principle lies at the heart of Bell-type experiments involving spacelike-separated parties. Although not all no-signalling correlations can be realized within quantum theory, the no-sinalling principle itself remains fundamental even for post-quantum theories, since its violation would imply superluminal signalling. In the spirit of device independent nature of the Bell inequalities, there has recently been growing interest in causal inequalities, which identify correlations admitting no causal explanation. Such correlations have been rigorously studied within the process-matrix formalism, which is built upon a constraint analogous to spacelike separation in Bell scenarios. We refer to this constraint as the single-interaction constraint, whereby each party interacts with the environment only once. Despite considerable effort, no fundamental guiding principle is currently known that characterizes the restrictions this constraint imposes on the correlations generated by Process Matrices. In this work, we address this gap by proposing a generalized causality principle for the bipartite case. We show that for certain cases, the generalised causality principle defines a strict subset of the set of all signalling correlations. This principle thus provides a fundamental limit on quantum and even post-quantum theories subject to the single-interaction constraint. By analogy with no-signalling, whose violation excludes certain causal structures among the parties, a violation of the generalised causality principle rules out a corresponding class of causal structures(causal structures with the single interaction constraint). More broadly, our techniques provide a general framework for deriving generalised causal constraints, applicable to a wide class of indefinite-causal-order scenarios.
\end{abstract}


\maketitle	
\section{Introduction}
The everyday notion of causality rests on an operational picture of spacetime: for any pair of events, either a signal can be sent from one to the other, or signalling is impossible in both directions. Within quantum theory, such causally ordered processes are described by quantum circuits and, more generally, by quantum higher-order maps~\cite{Chiribella2008,Chiribella2009}. However, several lines of thought suggest that, at a more fundamental level, reconciling the dynamical causal structure of general relativity with the probabilistic nature of quantum mechanics may require abandoning a definite underlying time or causal order~\cite{Hardy2007}. Indeed, indefinite causal order already arises within the framework of higher-order quantum theory itself, as exemplified by the quantum switch, in which a control system coherently determines the order in which two processes are applied~\cite{Chiribella2012,Chiribella2013,Arajo2014,Procopio2015}.

Over a decade ago, Oreshkov, Costa, and Brukner~\cite{Oreshkov2012} proposed a device-independent approach to testing such indefinite causal structures. In their framework, each party is situated in a closed, isolated laboratory and interacts with the external environment only once: a quantum system from the environment enters the laboratory, the party applies a quantum instrument with multiple classical outcomes, and the resulting evolved state is returned to the environment. Crucially, once the system has left the laboratory, no further information can re-enter it, so that no second interaction with the environment is possible. When every party operates under this \emph{single-interaction constraint}, the observed correlations are fully determined by the state of the environment, known in the literature as the \emph{process matrix}. Such process matrices can generate correlations that admit no causal explanation, as witnessed by the violation of causal inequalities~\cite{Oreshkov2012}. Since then, the structure and properties of process matrices have been studied extensively~\cite{Oreshkov2016,Baumeler2016,Kunjwal2023,Naik2025}.

In analogy with the Tsirelson bound in Bell--CHSH-type experiments, considerable effort has been devoted to determining the maximal quantum violations of causal inequalities within the process-matrix framework~\cite{Branciard2015}. Recently, Liu \emph{et al.}~\cite{Liu2025} showed that the exact quantum violation can be computed for a class of causal inequalities known as single-trigger inequalities, and that their approach also yields an upper bound on the violation of arbitrary causal inequalities. In a related direction, Bavaresco \emph{et al.}~\cite{Bavaresco2024} introduced process matrices within Boxworld-type generalised probabilistic theories (GPTs) and showed that, under suitable constraints, the set of correlations obtainable from the Boxworld GPT forms a strict subset of the set of all correlations. They further conjectured that these Boxworld correlations provide an outer approximation to the set of quantum correlations generated by quantum instruments and quantum processes.

Motivated by these results, we address the following fundamental question: \emph{just as the constraint of spacelike separation imposes the no-signalling condition on the correlations observable in the Bell scenario, does the single-interaction constraint impose an analogous, theory-independent restriction on the correlations observable within the process-matrix framework?} We answer this question in the affirmative for the bipartite case by introducing a generalisation of the no-signalling principle to the process-matrix scenario, which we call the \emph{generalised causality principle}. We show that, when both parties implement instruments with binary outcomes, this principle imposes no additional restriction on the set of observed correlations. In contrast, when both parties implement three instruments with three outcomes each, it imposes a nontrivial restriction. The set of correlations satisfying the generalised causality principle forms a convex polytope, which we call the \emph{Generalised Causality Polytope}, and which plays a role analogous to that of the no-signalling polytope in the Bell scenario. Accordingly, just as a violation of the no-signalling principle can be interpreted as evidence of superluminal influence, a violation of the generalised causality principle can be interpreted as evidence of a hypothetical signalling influence that does not satisfy the single interaction constraint.

To establish these results, we begin in Section~\ref{prelimanaries} by introducing the necessary preliminaries. In contrast to the approach of \cite{Bavaresco2024}, where boxworld objects are defined as probability distributions, we formulate arbitrary generalised probabilistic theories (GPTs) in the language of Hilbert-space operators. We show that, as in quantum theory, objects in GPTs can be represented by Hermitian operators subject to appropriate constraints. Building on this formalism, Section~\ref{results} presents the main results of this work. Finally, in Section~\ref{discussion}, we discuss the broader implications and applicability of our results.

\section{Preliminaries}\label{prelimanaries}

In this section, we introduce the framework of general probabilistic theories (GPTs) in the Hilbert space operator representation, which will be convenient for the analysis in this work.

\subsection{General probabilistic theories}

In the framework of general probabilistic theories a system $\mathcal{S}$ is specified by a pair
\begin{equation}
    \mathcal{S} = (\Omega_{\mathcal{S}}
, \mathcal{E}_{\mathcal{S}}),
\end{equation}
where $\Omega_{\mathcal{S}}$ denotes the normalised state space and $\mathcal{E}_{\mathcal{S}}$ denotes the effect space. Throughout this work we assume the no restriction hypothesis which states that any object (measurement effect, transformations) that consistently provides predictions must be an actual component in the theory.

To construct the underlying vector space structure of a GPT, we associate with the system a $d_{\mathcal{S}}$-dimensional Hilbert space $H_{\mathcal{S}}$. Let us denote the set of generalized Pauli operators acting on $H_{\mathcal{S}}$ as
\begin{align*}
\mathcal{P}_{\mathcal{S}} =&\left\{\sigma^{\mu}_{\mathcal{S}}\right\}
_{\mu=0}^{d_{\mathcal{S}}^2-1},~\mbox{with}~\Tr[\sigma^{\mu}_{\mathcal{S}}
\sigma^{\nu}_{\mathcal{S}}]=d_{\mathcal{S}}\delta_{\mu\nu},\\
&\left(\sigma^{\mu}_{\mathcal{S}}\right)^2=\mathbb{I}_{\mathcal{S}}~\mbox{and}~\sigma^{0}_{\mathcal{S}}=\mathbb{I}_{\mathcal{S}}    
\end{align*}

Where $\mathbb{I}_{\mathcal{S}}$ denotes the identity operator. For the system $\mathcal{S}$ we denote the set of Hermitian operators of $\mathcal{S}$ as
\begin{equation}
\operatorname{Herm}(\mathcal{S}):=\operatorname{span}_{\mathbb{R}}(\mathcal{P'}_{\mathcal{S}})~\mbox{where}~
\mathcal{P'}_{\mathcal{S}}=\left\{\sigma^{\mu}_{\mathcal{S}}\right\}
_{\mu=0}^{l_{\mathcal{S}}-1}
\end{equation}
Here we take $l_{\mathcal{S}}\leq d^2_{\mathcal{S}}$ as for general systems $d_{\mathcal{S}}^2$ Pauli's might not be necessary to denote the state space. For instance for the square bit theory it suffices to take $d_{\mathcal{S}}=2$ and $l_{\mathcal{S}}=3$ as the state space spans a $3$ dimensional real vector space. Formally $l_{\mathcal{S}}$ is known as the linear dimension of the system $\mathcal{S}$.

For a general GPT the normalised state space $\Omega_{\mathcal{S}}$ is then defined as a convex subset of $\operatorname{Herm}(\mathcal{S})$ as follows
\begin{equation}
    \Omega_{\mathcal{S}}:=\left\{\omega_{\mathcal{S}}\in \operatorname{Herm}(\mathcal{S})~|~\Tr[\omega_{\mathcal{S}}]=1\right\}
\end{equation}
Every state $\omega_{\mathcal{S}} \in \Omega_{\mathcal{S}}$ can then be written as,
\begin{equation}
\omega_{\mathcal{S}}=\frac{1}{d}\left[\mathbb{I}_{\mathcal{S}}+\sum_{j=1}^{l_{\mathcal{S}-1}}r_j\sigma^j_{\mathcal{S}}\right],\qquad r_j \in \mathbb{R}.
\end{equation}
Without loss of generality we assume that $\mathbb{I}/{d_{S}}$ can always be included in the interior of $\Omega_{\mathcal{S}}$ for any system $\mathcal{S}$. Under the no restriction hypothesis we then define a valid effect $e_{\mathcal{S}}\in\mathcal{E}_{\mathcal{S}}$ as an element of the dual vector space $\operatorname{Herm}(\mathcal{S})^{*}$ that maps states to probabilities as follows 
\begin{align}
    \mathcal{E}_{\mathcal{S}}=\left\{e_{\mathcal{S}}\in \operatorname{Herm}(\mathcal{S})^{*}~|~0\leq\Tr[e_{\mathcal{S}}\omega_{\mathcal{S}}]\leq1~\forall~\omega_{\mathcal{S}}\in\Omega_{\mathcal{S}}\right\}
\end{align}

Similar to ordinary Quantum theory the role of the unit effect is taken by the identity operator as $\Tr[\mathbb{I}_{\mathcal{S}}\omega_{\mathcal{S}}]=1~\forall~\omega_{\mathcal{S}}$. We may occasionally omit the subscript of an operator when its associated system is clear from the context.

\subsection{Transformations among GPTs}
Given a input system $\mathcal{S}_1$ and an output system $\mathcal{S}_2$ a probabilistic transformation $\mathcal{T}:\operatorname{Herm}(\mathcal{S}_1)\mapsto\operatorname{Herm}(\mathcal{S}_2)$ is a linear map satisfying
\begin{align}
\mathcal{T}(\omega)\in\operatorname{Cone}(\Omega_{\mathcal{S}_2})~\mbox{and}~\Tr[\mathcal{T}(\omega)]\leq1~\forall~\omega\in\Omega_{\mathcal{S}_1}~
\end{align}
Here the $\operatorname{Cone}$ function denotes the conical hull of the set appearing in its argument. A transformation is termed as a deterministic transformation if $\mathcal{T}(\Omega_{\mathcal{S}_1})\subseteq\Omega_{\mathcal{S}_2}$. The probabilistic and deterministic transformations play the role of trace-nonincreasing and trace preserving transformations in Quantum Theory respectively.

\emph{Choi–Jamiołkowski isomorphism}\cite{Choi1975}: Similar to quantum theory we define the notion of a maximally entangled operator on two systems $\mathcal{S}$ and $\mathcal{S'}$ which are isomorphic to each other as follows
\begin{align}
    \Phi_{SS'}=\frac{1}{d_\mathcal{S}}\left(\sum_{\mu=0}^{l_\mathcal{S}-1}\sigma^{\mu}_{\mathcal{S}}\otimes\sigma^{\mu}_{\mathcal{S}'}\right)
\end{align}
Denoting $\mathcal{I}$ as the identity transformation we define the Choi operator $T$ for transformation $\mathcal{T}$ as\footnote{For Quantum theory we use the usual definition of Choi Isomorphism by taking $\Phi=\sum_{ij}\ket{i}\bra{j}\otimes\ket{i}\bra{j}$ as the unnormalised maximally entangled state.}
\begin{align}
T_{\mathcal{S}_1\mathcal{S}_2}:=\mathcal{I}\otimes\mathcal{T}\left(\Phi_{\mathcal{S}_1\mathcal{S}'_1}\right)
\end{align}
A crucial point to note is that as the transformation $\mathcal{T}$ is a hermiticity preserving transformation the Choi operator $T$ will always be a hermitian operator. As the above equation defines an isomorphism it is possible to compute the action of $\mathcal{T}$ from $T$ as follows
\begin{align}
\mathcal{T}\left(\omega_{\mathcal{S}_1}\right)=\Tr_{\mathcal{S}_1}\left[\left(\omega_{\mathcal{S}_1}\otimes\mathbb{I}_{\mathcal{S}_2}\right)T_{\mathcal{S}_1\mathcal{S}_2}\right]
\end{align}
For deterministic transformations we have
\begin{align}
&\Tr[\mathcal{T}\left(\omega\right)]=1~\forall\omega\implies _{\mathcal{S}_2}T_{\mathcal{S}_1\mathcal{S}_2}=\frac{\mathbb{I}_{\mathcal{S}_1\mathcal{S}_2}}{d_{\mathcal{S}_2}}\\
&\mbox{Where}~_{X}T=\frac{\mathbb{I_X}}{d_X}\otimes\Tr_X[T].\nonumber
\end{align}
A generalised instrument $\mathbf{I}_{\mathcal{S}_1\mathcal{S}_2}$ is a collection of probabilistic transformations that sum up to a deterministic transformation.
\begin{align}
\mathbf{I}_{\mathcal{S}_1\mathcal{S}_2}=\left\{T_{\mathcal{S}_1\mathcal{S}_2}^k~\Big{|}~_{\mathcal{S}_2}\left(\sum_kT_{\mathcal{S}_1\mathcal{S}_2}^k\right)=\frac{\mathbb{I}_{\mathcal{S}_1\mathcal{S}_2}}{d_{\mathcal{S}_2}}\right\}
\end{align}
Given a state $\omega$ as an input to the instrument the index $k$ represents a classical outcome generated by the instrument.  $\mathcal{T}^k(\omega)$ denotes the transformed subnormalised state while the probability for obtaining the $k^{th}$ outcome is given by $\Tr[\mathcal{T}^k(\omega)]$.

\subsection{Bipartite Process Matrices in GPTs}

Given an agent Alice we define the system $A_I$ as a system entering Alice's Laboratory. Alice then performs any transformation of this system to a possibly different system $A_O$. The set of all probabilistic transformations for Alice is denoted by $\mathbf{T}_{\mathcal{A}}:=\left\{T_{A_IA_O}\right\}$. Similarly for Bob we define the set $\mathbf{T}_{\mathcal{B}}:=\left\{T_{B_IB_O}\right\}$. We also define the subsets $\mathbf{T}^{D}_{\mathcal{A}}\subseteq\mathbf{T}_{\mathcal{A}}$ and $\mathbf{T}^{D}_{\mathcal{B}}\subseteq\mathbf{T}_{\mathcal{B}}$ containing only deterministic transformations.
\begin{definition}
A bipartite hermitian process $W_{A_IA_OB_IB_O}$ is a four partite hermitian matrix that describes the state of the environment of Alice's and Bob's Laboratories. Such a process is a bilinear map that yields consistent probabilities for all transformations applied by Alice and Bob i.e.
\begin{subequations}\label{ConstraintsW}
  \begin{align}
&0\leq\Tr[W_{A_IA_OB_IB_O}\left(T_{A_IA_O}\otimes T_{B_IB_O}\right)]\leq1~\nonumber\\
&\forall~T_{A_IA_O}\in\mathbf{T}_{\mathcal{A}},\forall~T_{B_IB_O}\in\mathbf{T}_{\mathcal{B}}~\mbox{and}\label{positivityW}
\end{align}
\begin{align}
&\Tr[W_{A_IA_OB_IB_O}\left(T_{A_IA_O}\otimes T_{B_IB_O}\right)]=1~\nonumber\\
&\forall~T_{A_IA_O}\in\mathbf{T}^{D}_{\mathcal{A}},\forall~T_{B_IB_O}\in\mathbf{T}^{D}_{\mathcal{B}}.\label{normalisationW}
  \end{align}
\end{subequations}
\end{definition}
In the following we derive the constraints on $W$ imposed by Eq.(\ref{ConstraintsW}). It turns out that the positivity constraints in Eq.(\ref{positivityW}) are difficult to analyse in general and depends on the specific systems $A_I,A_O,B_I$ and $B_O$. Thus now we concentrate on the normalisation constraints Eq.(\ref{normalisationW}). Before moving on to the explicit analysis of the constraint we first introduce a useful lemma
\begin{lemma}\label{spanlemma}
Under the no restriction hypothesis the set $\mathbf{T}_{\mathcal{A}}$ spans the space of all Hermitian Operators associated with the composite system $A_I$ and $A_O$ i.e.
\begin{align*}
\operatorname{Span}_{\mathbb{R}}\mathbf{T}_{\mathcal{A}}=\operatorname{Span}_{\mathbb{R}}\left\{\sigma^{\mu}_{A_I}\otimes\sigma^{\nu}_{A_O}\right\}_{\mu=0,\nu=0}^{l_{A_I}-1,l_{A_O}-1}
\end{align*}
\end{lemma}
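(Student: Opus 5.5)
The inclusion $\operatorname{Span}_{\mathbb{R}}\mathbf{T}_{\mathcal{A}}\subseteq\operatorname{Span}_{\mathbb{R}}\{\sigma^{\mu}_{A_I}\otimes\sigma^{\nu}_{A_O}\}$ is immediate. Every transformation is a linear map $\operatorname{Herm}(A_I)\to\operatorname{Herm}(A_O)$, and its Choi operator $\mathcal{I}\otimes\mathcal{T}(\Phi_{A_IA_I'})$ is built from $\sigma^{\mu}_{A_I}\otimes\mathcal{T}(\sigma^{\mu}_{A_I'})$ with $\mathcal{T}(\sigma^{\mu})\in\operatorname{Herm}(A_O)$. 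So all the work is in the reverse inclusion. The plan is to show that a small neighbourhood of a fixed ``interior'' transformation lies inside $\mathbf{T}_{\mathcal{A}}$. A convex set with nonempty interior in a real vector space spans that space, and this gives the result.

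\emph{Step 1 (reference transformation).} Take the completely depolarising map $\mathcal{T}_0(\omega)=\lambda\,\Tr[\omega]\,\mathbb{I}_{A_O}/d_{A_O}$ with $0<\lambda<1$. Its Choi operator is proportional to $\mathbb{I}_{A_IA_O}$. By the standing assumption, $\mathbb{I}/d_{A_O}$ lies in the interior of $\Omega_{A_O}$. Hence $\mathcal{T}_0(\omega)=\lambda\,\mathbb{I}/d_{A_O}$ lies strictly inside $\operatorname{Cone}(\Omega_{A_O})$, and its trace $\lambda$ is strictly below $1$.

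\emph{Step 2 (perturbation).} For an arbitrary basis element $X=\sigma^{\mu}_{A_I}\otimes\sigma^{\nu}_{A_O}$, consider the linear map $\mathcal{T}_\epsilon$ with Choi operator $T_0+\epsilon X$. Its action is $\mathcal{T}_\epsilon(\omega)=\mathcal{T}_0(\omega)+\epsilon\,c_\mu(\omega)\,\sigma^{\nu}_{A_O}$, where $c_\mu(\omega)=\Tr[\omega\,\sigma^{\mu}]$ is real and, up to a fixed constant depending on the transpose convention of the Choi map, is at most $1$ in modulus for $\omega\in\Omega_{A_I}$. We then need uniform bounds over all states. This is the main obstacle: $\Omega_{A_I}$ must be bounded so that $|c_\mu(\omega)|\le C$ uniformly. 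I would argue this from the fact that the interior condition, together with the effect space being defined by $0\le\Tr[e\omega]\le1$, forces $\Omega$ to be bounded. Alternatively, it follows because $\Omega$ is implicitly compact in any physical GPT; I would state this explicitly as an assumption if needed.

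Given boundedness, choose $\epsilon$ small enough that $\lambda\,\mathbb{I}/d_{A_O}+\epsilon c\,\sigma^{\nu}$ stays in the open ball around $\lambda\,\mathbb{I}/d_{A_O}$ contained in $\operatorname{Cone}(\Omega_{A_O})$. This uses interiority of $\mathbb{I}/d$ and scaling of cones. Also choose $\epsilon$ so that the trace, $\lambda+\epsilon c\,\Tr[\sigma^\nu]$, stays in $[0,1]$. For $\nu\ne0$ the trace term vanishes, and for $\nu=0$ the slack $1-\lambda$ handles it. By the no-restriction hypothesis, $\mathcal{T}_\epsilon\in\mathbf{T}_{\mathcal{A}}$.

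\emph{Step 3 (conclusion).} Then $X=(T_\epsilon-T_0)/\epsilon\in\operatorname{Span}_{\mathbb{R}}\mathbf{T}_{\mathcal{A}}$ for every $\mu,\nu$, which proves the equality. The only delicate points are the uniform boundedness of $\Omega_{A_I}$ and keeping the Choi-representation conventions (partial transpose, normalisation of $\Phi$) consistent. Neither affects the span argument, since they change $X$ only by nonzero scalars and by an invertible linear map on the basis.
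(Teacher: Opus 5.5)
Your proof is correct and follows essentially the paper's route. Both arguments perturb the completely depolarising map (Choi operator $\propto\mathbb{I}_{A_IA_O}$) in an arbitrary direction, and use interiority of $\mathbb{I}/d_{A_O}$ plus no-restriction to keep the perturbed map in $\mathbf{T}_{\mathcal{A}}$. The paper phrases this as a contradiction with a nonzero operator orthogonal to $\operatorname{Span}_{\mathbb{R}}\mathbf{T}_{\mathcal{A}}$, rescaled to control the trace. You argue directly on each basis element, with $\lambda<1$ giving trace slack.

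One caveat concerns boundedness of $\Omega_{A_I}$: your first suggested justification does not work, because boundedness does not follow from the interior assumption and the definition of $\mathcal{E}$. Take $\Omega_{A_I}$ to be a half-space of trace-one operators: it has $\mathbb{I}/d$ in its interior, yet every valid map then has state-independent trace, so $\Tr_{A_O}T\propto\mathbb{I}_{A_I}$ and the lemma fails. You should therefore take your second option and assume compactness explicitly, which the paper's ``sufficiently small $\lambda$'' step also tacitly does.
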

The proof is provided in the appendix \ref{Proof of Spanlemma}. The normalisation conditions in Eq.(\ref{normalisationW}) imply

\begin{subequations}\label{Processnormalisation}
\begin{align}
\Tr\left[W\right]&=d_{A_O}d_{B_O}\label{c1}\\  
_{B_IB_O}W&=_{A_OB_IB_O}W\label{c2}\\
_{A_IA_O}W&=_{B_OA_IA_O}W\label{c3}\\
W+_{A_OB_O}W&=_{A_O}W+_{B_O}W.\label{c4}
\end{align}
\end{subequations}
See Appendix \ref{Normalisation constraints for Hermitian Processes} for detailed derivation. It is interesting to note that the normalisation conditions on Hermitian processes coincide directly with the normalisation conditions of Quantum Process Matrices\cite{Oreshkov2012,Arajo2015}.

\emph{Causally separable processes} A Hermitian Process $W_{A_IA_OB_IB_O}$ is said to be causally separable if it can be written in the following form
\begin{align}
    W=pW^{A\not\prec B}+(1-p)W^{B\not\prec A}~~~p\in[0,1]
\end{align}
Where $W^{A\not\prec B}(W^{B\not\prec A})$ is a valid hermitian process where Alice(Bob) cannot signal to Bob(Alice). Such processes also satisfy

\begin{subequations}\label{cs0}
\begin{align} 
&W^{B\not\prec A}= _{B_O}W^{B\not\prec A},~
_{B_IB_O}W^{B\not\prec A}= _{A_OB_IB_O}W^{B\not\prec A},\label{cs0a}\\
&W^{A\not\prec B}= _{A_O}W^{A\not\prec B},~
_{A_IA_O}W^{A\not\prec B}= _{B_OA_IA_O}W^{A\not\prec B}.\label{cs0b}\\
&~~~~~~~~~~~~~~~~~W^{A\nprec\nsucc B}=_{A_OB_O}W^{A\nprec\nsucc B}
\end{align}
\end{subequations} 
If a process is no-signalling in both direction we denoted it by $W^{A\nprec\nsucc B}$. The above conditions are derived in detail in Appendix \ref{Constraints for Causal Hermitian Process}.

\subsection{Hermitian Correlations}

A correlation $\{p(ab|xy)\}$ denotes the probability of Alice and Bob obtaining the classical outcomes $a$ and $b$ given that they have implemented the instrument $x$ and $y$ respectively. It is assumed that every party interacts with the environment only once i.e. a system from the environment enters their lab after which they implement an instrument on the system, labelled by $x(y)$ to obtain a classical outcome $a(b)$ and then, they send the evolved state outside their lab back to the environment. The correlations achieved via this single interaction constraint are given by
\begin{align}\label{correlation}
    p(ab|xy)=\Tr[W_{A_IA_OB_IB_O}(T^{a|x}_{A_IA_O}\otimes T^{b|y}_{B_IB_O})]
\end{align}
where $T_{A_IA_O}^{a|x}\in\mathbf{T}_{\mathcal{A}}~\forall~a,x$, $T_{B_IB_O}^{b|y}\in\mathbf{T}_{\mathcal{B}}~\forall~b,y$ and $\sum_aT_{A_IA_O}^{a|x}:=T_{A_IA_O}^{~|x}\in\mathbf{T}^{D}_{\mathcal{A}}~\forall~x$,$\sum_bT_{B_IB_O}^{b|y}:=T_{B_IB_O}^{~|y}\in\mathbf{T}^{D}_{\mathcal{B}}~\forall~y$.
\begin{definition}
A correlation  $\{p(ab|xy)\}$ is called a hermitian correlation if $\exists$ systems $A_I,A_O,B_I$ and $B_O$, a valid hermitian process $W_{A_IA_OB_IB_O}$ and some instruments $\{T_{A_IA_O}^{a|x}\},\{T_{B_IB_O}^{b|y}\}$ such that Eq.(\ref{correlation}) is satisfied.  
\end{definition}

\begin{definition}
A correlation $\{p(ab|xy)\}$ is said to be causal if it can be expressed as
\begin{align}
p(ab|xy)=q p^{A\not \prec B}(ab|xy) + (1-q) p^{B\not \prec A}(ab|xy)\nonumber
\end{align}
Where $\{p^{A(B)\not \prec B(A)}(ab|xy)\}$ signifies that Alice(Bob) is not in the Causal past of Bob(Alice). More formally 
\begin{subequations}
\begin{align}
&\sum_{a}p^{A\not \prec B}(ab|xy)=\sum_{a}p^{A\not \prec B}(ab|x'y)~\forall~b,y,x,x'\label{Causalcorr1}\\
&\sum_{b}p^{B\not \prec A}(ab|xy)=\sum_{b}p^{B\not \prec A}(ab|xy')~\forall~a,x,y,y'\label{Causalcorr2}
\end{align}
\end{subequations}
We denote $p^{A\not \prec \not \succ B}(ab|xy)$ as a no-signalling correlation if both equations above are satisfied.
\end{definition}

Denoting the cardinalites of $a,b,x$ and $y$ as $O_A,O_B,I_A$ and $I_B$ respectively, we define the full correlation polytope as follows
\begin{definition}
The full correlation polytope $\mathbf{FC}(I_A,I_B,O_A,O_B)$ is the largest set of consistent probability distributions i.e.
\begin{align}
&\mathbf{FC}(I_A,I_B,O_A,O_B)=\nonumber\\
&\left\{\{p(ab|xy)\}~\Big{|}~p(ab|xy)\geq0,\sum_{ab}p(ab|xy)=1\right\}    
\end{align}
\end{definition}
\section{Results}\label{results}
We now proceed to the main contribution of this work. We first state the the definition of an average causal correlation.
\begin{definition}
A correlation $\{p(ab|xy)\}$ is called an average causal correlation if it can be expressed as 
\begin{align}
    p(ab|xy)=\frac{1}{2}p^{A\not \prec B}(ab|xy)+\frac{1}{2}p^{B\not \prec A}(ab|xy)
\end{align}
\end{definition}
It is important to note that due to the convex mixture an average causal correlation has restricted signalling power. Given the definition of an average causal correlation the generalised causality principle is defined as follows:

\begin{definition}\label{GCPdefnition}
A correlation $\{p(ab|xy)\}$ is said to satisfy generalised causality principle (GCP) if there exists a no-signalling correlation $\{p^{A \not \prec \not \succ B}(ab|xy)\}$ such that the average correlation 
\begin{align}
    \frac{1}{2}p(ab|xy)+\frac{1}{2}p^{A \not \prec \not \succ B}(ab|xy)
\end{align}
has an average causal explanation. Equivalently,
\begin{align}
    p(ab|xy)=&p^{A\not \prec B}(ab|xy)+p^{B\not \prec A}(ab|xy)\nonumber\\
    &-p^{A \not \prec \not \succ B}(ab|xy)~\forall~a,b,x,y\label{GCP}
\end{align}
\end{definition}
To appreciate the content of this principle, it is instructive to consider a correlation that violates it. For such a correlation, an equal mixture with any no-signalling correlation admits no average causal explanation. Equivalently, its signalling strength exceeds the critical limit permitted by the principle. We now prove the main result of our paper

\begin{theorem}\label{GCPTheorem}
Every hermitian correlation satisfies the generalised causality principle.
\end{theorem}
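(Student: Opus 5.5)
The plan is to decompose the process matrix itself along the lines of the normalisation constraint Eq.~(\ref{c4}) and then push this decomposition through the Born-type rule Eq.~(\ref{correlation}). Given a valid hermitian process $W$, define
\begin{align*}
W^{A\not\prec B}:={}_{A_O}W,\qquad W^{B\not\prec A}:={}_{B_O}W,\qquad W^{A\nprec\nsucc B}:={}_{A_OB_O}W .
\end{align*}
Eq.~(\ref{c4}) then gives the operator identity $W=W^{A\not\prec B}+W^{B\not\prec A}-W^{A\nprec\nsucc B}$. Since the map $W\mapsto\Tr[W(T^{a|x}\otimes T^{b|y})]$ is linear, the correlation splits accordingly:
\begin{align*}
p(ab|xy)=p_1(ab|xy)+p_2(ab|xy)-p_3(ab|xy),
\end{align*}
where $p_1,p_2,p_3$ are obtained by inserting ${}_{A_O}W$, ${}_{B_O}W$ and ${}_{A_OB_O}W$ in place of $W$. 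This has exactly the form of Eq.~(\ref{GCP}). It remains to show that $p_1$, $p_2$ and $p_3$ are genuine correlations with the required signalling properties.

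\textbf{Signalling properties.} I will check that each of the three operators satisfies the relevant conditions in Eq.~(\ref{cs0}).
\begin{itemize}
\item The operator ${}_{A_O}W$ is invariant under ${}_{A_O}$, because the trace-and-replace map is idempotent.
\item Using Eq.~(\ref{c3}) and the commutation of trace-and-replace maps on distinct subsystems, ${}_{A_IA_O}({}_{A_O}W)={}_{A_IA_O}W={}_{B_OA_IA_O}W={}_{B_OA_IA_O}({}_{A_O}W)$.
\item The analogous statements hold for ${}_{B_O}W$ using Eq.~(\ref{c2}).
\item Trivially, ${}_{A_OB_O}W$ satisfies all of these conditions.
\end{itemize}
Operationally, when $A_O$ is replaced by the maximally mixed operator, Bob's marginal cannot depend on $x$. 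Indeed, $\sum_a T^{a|x}=T^{|x}$ is deterministic, so ${}_{A_O}T^{|x}=\mathbb{I}/d_{A_O}$, and this no longer depends on $x$. This yields Eq.~(\ref{Causalcorr1}) for $p_1$, Eq.~(\ref{Causalcorr2}) for $p_2$, and both conditions for $p_3$. Normalisation, $\sum_{ab}p_i=1$, follows from Eq.~(\ref{c1}) and Eqs.~(\ref{c2})--(\ref{c3}), applied as in the derivation of Eq.~(\ref{Processnormalisation}).

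\textbf{Positivity (the main obstacle).} The difficulty is that ${}_{A_O}W$ need not be a valid process, so $p_i(ab|xy)\geq0$ is not automatic. For $p_3$ I would use the identity $\Tr[{}_{A_OB_O}W\,(T^{a|x}\otimes T^{b|y})]=\Tr[W\,({}_{A_O}T^{a|x}\otimes{}_{B_O}T^{b|y})]$. The operator ${}_{A_O}T^{a|x}=\Tr_{A_O}[T^{a|x}]\otimes\mathbb{I}/d_{A_O}$ is the Choi operator of the map ``apply the effect $\Tr\circ\mathcal{T}^{a|x}$, then prepare $\mathbb{I}/d_{A_O}$''. This map is a legitimate probabilistic transformation, since $\mathbb{I}/d$ is a state by assumption. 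Hence by Eq.~(\ref{positivityW}), $p_3\geq0$. The same argument, applied to one party only, gives $p_1,p_2\geq0$.

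If positivity failed at some step, the fallback would be to note that Eq.~(\ref{GCP}) only requires the existence of $p^{A\not\prec B}$, $p^{B\not\prec A}$ and $p^{A\nprec\nsucc B}$. One could then add a common nonnegative no-signalling term to $p_1$ and $p_2$ and subtract it via $p_3$, preserving the identity. I expect the instrument-replacement argument above to suffice, however, so that no such adjustment is needed.
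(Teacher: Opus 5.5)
Your proposal is correct and follows essentially the same route as the paper: decompose $W$ via Eq.~(\ref{c4}), move the trace-and-replace maps onto the instruments by self-adjointness, and get positivity from the fact that ${}_{A_O}T$ is again a valid (measure-and-prepare) transformation. Your explicit check that $\sum_a {}_{A_O}T^{a|x}=\mathbb{I}_{A_IA_O}/d_{A_O}$ is independent of $x$ is slightly more direct than the paper's appeal to Eq.~(\ref{cs0b}) for the no-signalling property. The fallback paragraph is unnecessary, and as written it would break the normalisation of $p_1$ and $p_2$.
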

The proof is discussed in the Appendix \ref{Proof of GCPTheorem}.

\begin{corollary}
All causal correlations satisfy GCP.
\end{corollary}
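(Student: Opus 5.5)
The plan is to verify Definition~\ref{GCPdefnition} directly from the definition of a causal correlation. This avoids having to realise every causal correlation by a hermitian process and then invoke Theorem~\ref{GCPTheorem}. The key observation is that both the condition \eqref{Causalcorr1} defining $p^{A\not\prec B}$ and the condition \eqref{Causalcorr2} defining $p^{B\not\prec A}$ are linear equalities. Together with normalisation and positivity, each therefore carves out a convex subset of $\mathbf{FC}(I_A,I_B,O_A,O_B)$. Moreover, every no-signalling correlation lies in both sets, since it satisfies \eqref{Causalcorr1} and \eqref{Causalcorr2} simultaneously.

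So let $p(ab|xy)=q\,p_1(ab|xy)+(1-q)\,p_2(ab|xy)$ be causal, with $p_1$ satisfying \eqref{Causalcorr1} and $p_2$ satisfying \eqref{Causalcorr2}. Fix any no-signalling correlation $N(ab|xy)$; the uniform distribution $N(ab|xy)=1/(O_AO_B)$ suffices, and its existence is the only thing needed. Define
\begin{align*}
p^{A\not\prec B}&:=q\,p_1+(1-q)\,N,\\
p^{B\not\prec A}&:=(1-q)\,p_2+q\,N .
\end{align*}
The first is a convex combination of two correlations obeying \eqref{Causalcorr1}, so it is a valid correlation in which Alice does not signal to Bob. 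Symmetrically, the second obeys \eqref{Causalcorr2}.

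Adding the two definitions gives $p^{A\not\prec B}+p^{B\not\prec A}=q\,p_1+(1-q)\,p_2+N=p+N$. Setting $p^{A\not\prec\not\succ B}:=N$, this is exactly Eq.~\eqref{GCP}. Equivalently, $\tfrac12 p+\tfrac12 N=\tfrac12 p^{A\not\prec B}+\tfrac12 p^{B\not\prec A}$ has an average causal explanation. This proves the corollary.

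There is no real obstacle. The only points to check are positivity and normalisation of the two constructed correlations, and both are immediate from convexity. As a consistency remark, one could instead route the argument through Theorem~\ref{GCPTheorem} by realising $p$ with a causally separable hermitian process, for instance a classical process built from the two one-way orders. That route requires an explicit construction, whereas the direct convexity argument above is self-contained.
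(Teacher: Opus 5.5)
Your proof is correct and is essentially the paper's argument. The paper first decomposes each one-way-signalling correlation as $(p_1,N,N)$ or $(N,p_2,N)$, and then appeals to closure of GCP under convex mixtures. Your explicit choice $p^{A\not\prec B}=q\,p_1+(1-q)N$, $p^{B\not\prec A}=(1-q)\,p_2+qN$, $p^{A\not\prec\not\succ B}=N$ is exactly that mixture written out, and using a common $N$ spares you the remark that mixtures of no-signalling correlations remain no-signalling.
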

\begin{proof}
If $\{p(ab|xy)\}$ is a correlation where Alice cannot signal to Bob we simply set $p^{A\not \prec B}(ab|xy):=p(ab|xy)$ and $p^{B\not \prec A}(ab|xy):=p^{A \not \prec \not \succ B}(ab|xy)$ where $p^{A \not \prec \not \succ B}(ab|xy)$ can be any arbitary no-signalling correlation. Similary we show the same for correlations where Bob cannot signal to Alice. For general Causal correlations it suffices to note that convex mixture of correlations satisfying GCP also satisfies GCP.
\end{proof}

\begin{definition}
Given $O_A,O_B,I_A$ and $I_B$ the convex set of all correlations that satisfy the GCP is called the Generalised Causality Polytope which we denote by 
\begin{align}
&\mathbf{GC}(I_A,I_B,O_A,O_B)=\nonumber\\
&\left\{\{p(ab|xy)\}~\Big{|}~\{p(ab|xy)\} ~\mbox{satifies GCP}~\right\}    
\end{align}
\end{definition}
For finite input and output alphabets, the generalised causality condition reduces to a finite set of linear constraints on $\{p(ab|xy)\}$, so the set of compatible correlations is a polytope. We begin by showing that, if all instruments have binary outcomes, generalised causality is satisfied by every correlation $\{p(ab|xy)\}$.

\begin{theorem}
$\mathbf{GC}(I_A,I_B,2,2)=\mathbf{FC}(I_A,I_B,2,2).$   
\end{theorem}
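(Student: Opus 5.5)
The plan is to write down, for an arbitrary $\{p(ab|xy)\}\in\mathbf{FC}(I_A,I_B,2,2)$, an explicit decomposition of the form of Eq.~(\ref{GCP}). The only inclusion that needs proof is $\mathbf{FC}(I_A,I_B,2,2)\subseteq\mathbf{GC}(I_A,I_B,2,2)$, since the reverse one is trivial. Denote the local marginals by $p_A(a|xy)=\sum_b p(ab|xy)$ and $p_B(b|xy)=\sum_a p(ab|xy)$. The idea is to make each one-way-signalling term carry one party's full marginal and give the other party uniform noise. I will then check that the leftover term required by Eq.~(\ref{GCP}) is automatically a valid no-signalling box when outputs are binary.

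Concretely, I would set
\begin{align*}
p^{A\not\prec B}(ab|xy)&:=\tfrac{1}{2}\,p_A(a|xy),\\
p^{B\not\prec A}(ab|xy)&:=\tfrac{1}{2}\,p_B(b|xy),\\
p^{A\not\prec\not\succ B}(ab|xy)&:=\tfrac{1}{2}p_A(a|xy)+\tfrac{1}{2}p_B(b|xy)-p(ab|xy).
\end{align*}
With these choices, Eq.~(\ref{GCP}) holds identically by construction.

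I then verify the constraints term by term.
\begin{itemize}
\item[(i)] $p^{A\not\prec B}$ is nonnegative. It is normalised because $\sum_{ab}\tfrac12 p_A(a|xy)=\tfrac12\cdot O_B=1$, using $O_B=2$. Bob's marginal is $\sum_a\tfrac12 p_A(a|xy)=\tfrac12$, which is independent of $x$, so Eq.~(\ref{Causalcorr1}) holds.
\item[(ii)] The analogous argument gives normalisation of $p^{B\not\prec A}$ and Eq.~(\ref{Causalcorr2}), this time using $O_A=2$.
\item[(iii)] For the third term, normalisation follows from $1+1-1=1$.
\end{itemize}

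For no-signalling of the third term, I compute its marginal over $b$:
\begin{align*}
\sum_b\Big[\tfrac12 p_A(a|xy)+\tfrac12 p_B(b|xy)-p(ab|xy)\Big]=p_A(a|xy)+\tfrac12-p_A(a|xy)=\tfrac12 .
\end{align*}
This uses $O_B=2$, and the marginal is uniform. Symmetrically, its marginal over $a$ is $\tfrac12$. Hence both Eqs.~(\ref{Causalcorr1}) and~(\ref{Causalcorr2}) hold, and the term is no-signalling in both directions.

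The only point needing real care, and the one I would flag as the crux, is nonnegativity of $p^{A\not\prec\not\succ B}$. This follows from $p(ab|xy)\le\min\{p_A(a|xy),p_B(b|xy)\}\le\tfrac12\big(p_A(a|xy)+p_B(b|xy)\big)$. The upper bound by $1$ is then automatic from normalisation.

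Binarity enters in exactly two places. First, it makes the uniform-noise normalisations equal $1$. Second, it makes the ``$+\tfrac12$'' in the marginals of the third term cancel the signalling, giving a constant marginal. For larger output alphabets this construction breaks, since the coefficients become $1/O$ and the marginals no longer cancel. This is consistent with the paper's claim that GCP becomes nontrivial for three outcomes.
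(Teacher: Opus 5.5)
Your proof is correct and is essentially the paper's: the paper writes the same three terms only for the deterministic extreme points $\delta_{af(x,y)}\delta_{bg(x,y)}$ (where $p_A=\delta_{af}$ and $p_B=\delta_{bg}$) and then concludes by convexity, whereas your marginal-based version is the linear extension of that construction and handles every correlation directly, so you need neither the extreme-point characterisation nor convexity. One correction to your closing aside: if you rescale the coefficients to $1/O_B$ and $1/O_A$, the third term still has constant marginals ($1/O_A$ and $1/O_B$), and what actually fails for larger alphabets is nonnegativity, since $p(ab|xy)\le p_A(a|xy)/O_B+p_B(b|xy)/O_A$ is violated by deterministic boxes when, e.g., $O_A=O_B=3$ (because $1>2/3$).
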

\begin{proof}
It suffices to show that all extreme points of $\mathbf{FC}(I_A,I_B,2,2)$ can indeed be expressed in the form of Eq.(\ref{GCP}). The extreme points of $\mathbf{FC}(I_A,I_B,2,2)$ are given as follows
\begin{align}
p(ab|xy)=\delta_{af(x,y)}\delta_{bg(x,y)}
\end{align}
Where, $f$ and $g$ are binary functions on $x$ and $y$. Defining
\begin{align}
p^{A\not\prec B}(ab|xy)&:=\frac{\delta_{af(x,y)}}{2}\nonumber\\
p^{B\not\prec A}(ab|xy)&:=\frac{\delta_{bg(x,y)}}{2}\nonumber\\
p^{A\not\prec\not\succ B}(ab|xy)&:=\frac{\delta_{af(x,y)}+\delta_{bg(x,y)}}{2}-\delta_{af(x,y)}\delta_{bg(x,y)}\nonumber
\end{align}
It can be checked that the above correlations satisfy the required no-sinalling conditions. This completes the proof.
\end{proof}
Although these sets coincide in this special case, for a larger number of outputs the generalised causality principle imposes nontrivial restrictions on the correlations that can be achieved using Hermitian processes, as demonstrated by the following theorem.

\begin{theorem}
$\mathbf{GC}(3,3,3,3)\subsetneq\mathbf{FC}(3,3,3,3).$   
\end{theorem}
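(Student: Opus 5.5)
The plan is to exhibit one explicit point of $\mathbf{FC}(3,3,3,3)$ and show, through a linear bound that holds on all of $\mathbf{GC}(3,3,3,3)$, that it cannot be written in the form of Eq.~(\ref{GCP}). The candidate is the perfect two-way signalling correlation in which each party outputs the other's input:
\begin{align}
p^\star(ab|xy)=\delta_{a,y}\,\delta_{b,x},\qquad a,b,x,y\in\{0,1,2\}. \nonumber
\end{align}
It is clearly a deterministic element of $\mathbf{FC}(3,3,3,3)$. The witness will be the linear functional
\begin{align}
S[p]=\sum_{x,y}p(a=y,b=x|xy). \nonumber
\end{align}
It takes the value $S[p^\star]=9$.

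First I rewrite the GCP in its normalised form. By Definition~\ref{GCPdefnition}, $\tfrac12 p+\tfrac12 p^{A\nprec\nsucc B}=\tfrac12 q^{A\not\prec B}+\tfrac12 q^{B\not\prec A}$, where $q^{A\not\prec B}$ and $q^{B\not\prec A}$ are normalised, nonnegative correlations satisfying Eqs.~(\ref{Causalcorr1}) and (\ref{Causalcorr2}) respectively. Hence $p=q^{A\not\prec B}+q^{B\not\prec A}-p^{A\nprec\nsucc B}$ with all three terms being genuine probability distributions. Applying $S$ and dropping the no-signalling term, which is allowed because $p^{A\nprec\nsucc B}\ge 0$, gives
\begin{align}
S[p]\le S[q^{A\not\prec B}]+S[q^{B\not\prec A}]. \nonumber
\end{align}

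Next I bound each term on the right. For $q^{A\not\prec B}$, Bob's marginal $q_B(b|y)$ does not depend on $x$. So $q^{A\not\prec B}(a=y,b=x|xy)\le q_B(b=x|y)$, and summing over $x$ for fixed $y$ gives at most $\sum_b q_B(b|y)=1$. Summing over the three values of $y$ then gives $S[q^{A\not\prec B}]\le 3$. Symmetrically, Alice's marginal under $q^{B\not\prec A}$ is independent of $y$, so summing over $y$ first gives $S[q^{B\not\prec A}]\le 3$. Therefore every correlation in $\mathbf{GC}(3,3,3,3)$ satisfies $S[p]\le 6<9=S[p^\star]$. This shows $p^\star\notin\mathbf{GC}(3,3,3,3)$, and since $\mathbf{GC}\subseteq\mathbf{FC}$ trivially, the inclusion is strict.

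The only delicate point is making sure the signs work: the subtraction of $p^{A\nprec\nsucc B}$ in Eq.~(\ref{GCP}) must be handled by nonnegativity, and the marginal-independence bounds must use the correct order of summation for each causal term. As a consistency check, the same argument with binary inputs and outputs gives only $S\le 2+2=4=S[p^\star]$, so no violation appears there, in agreement with $\mathbf{GC}(I_A,I_B,2,2)=\mathbf{FC}(I_A,I_B,2,2)$. It is the jump from $9$ to $6$ with three symbols that produces the separation.
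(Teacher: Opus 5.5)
Your proposal is correct and follows essentially the same route as the paper's proof. Both use the guess-your-neighbour's-input correlation $\delta_{ay}\delta_{bx}$, drop the no-signalling term by nonnegativity to force $\sum_{x,y}\bigl[p^{A\not\prec B}+p^{B\not\prec A}\bigr](a=y,b=x|xy)\geq 9$, and then bound each causal term by $3$ via the input-independence of the appropriate marginal, giving the contradiction $9\leq 6$.
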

\begin{proof}
We show this by explicitly demonstrating that the Guess your Neighbour's input type  correlation $\{p(ab|xy)=\delta_{ay}\delta_{bx}\}$ does not belong to $\mathbf{GC}(3,3,3,3)$. We start by assuming that this correlation satisfies GCP. Thus for $a=y$ and $b=x$ we have
\begin{align}
&p^{A \not \prec \not \succ B}(a=yb=x|xy)=p^{A\not \prec B}(a=yb=x|xy)\nonumber\\
&~~~~~~~~~~~~~~+p^{B\not \prec A}(a=yb=x|xy)-1   \end{align}
Since $p^{A \not \prec \not \succ B}(a=yb=x|xy)\geq0$ we have
\begin{align}
\sum_{xy}p^{A\not \prec B}(a=yb=x|xy)+p^{B\not \prec A}(a=yb=x|xy) \geq 9 \label{9ineq}  
\end{align}
But for the first term we have
\begin{align}
\sum_{xy}p^{A\not \prec B}(a=yb=x|xy)&\leq \sum_{xya}p^{A\not \prec B}(ab=x|xy)\nonumber\\
=\sum_{xy}~p^{A\not \prec B}&(b=x|xy)\nonumber\\
=\sum_{xy}~p^{A\not \prec B}&(b=x|y)=\sum_y 1 =3
\end{align}
Similarly for the second term we have 
\begin{align}
\sum_{xy}p^{B\not \prec A}(a=yb=x|xy)&\leq3
\end{align}
This is a contradiction as the left hand side in Eq.(\ref{9ineq}) can atmost be $6$.
\end{proof}
The above theorem thus successfully demonstrates that all signalling correlations are not attainable by hermitian processes in any GPT model.

\section{Discussion}\label{discussion}
The generalised causality principle imposes an explicit constraint on the correlations that can be observed in generalised probabilistic theories (GPTs) within the process-matrix scenario. The core argument underlying Theorem~\ref{GCPTheorem} rests on the normalisation constraint on process matrices. Our results show that, whereas positivity constraints are specific to quantum theory, normalisation constraints are universal across GPTs. This observation opens a natural direction for future work: the same approach can be extended to multipartite processes, and even to multi-round process matrices \cite{Hoffreumon2021}, to investigate the implications of normalisation constraints on such higher-order processes across all GPTs.

Our work also raises several questions of fundamental and mathematical interest. The first is whether every correlation satisfying the generalised causality principle can be reproduced by some GPT model. In the Bell scenario, an analogous statement is known to hold: multipartite Boxworld states suffice to reproduce any no-signalling correlation \cite{Plvala2025}. Establishing a corresponding result in the process-matrix setting would clarify the scope of GPTs as a framework for describing correlations without a definite causal order, and would be a significant mathematical result in its own right.

A second line of enquiry concerns the geometry of the Generalised Causality Polytope. Characterising the extremal points of $\mathbf{GC}(3,3,3,3)$ is a natural goal, but a direct computation is prohibitively expensive: for three inputs and three outputs, the no-signalling polytope already has a very high dimension and a large number of facets, and only partial results on its facet Bell inequalities and extremal points are currently available \cite{Cope2019}. It would nevertheless be of great interest to identify at least some extremal points of $\mathbf{GC}(3,3,3,3)$ that play a role analogous to that of the Popescu--Rohrlich box \cite{Popescu1994} in the Bell scenario. Such correlations could serve as canonical examples of maximal violations of causal inequalities, and could offer new insight into the structure of correlations compatible with the single interaction constraint.

\noindent{\bf Acknowledgement}:SGN acknowledges support from the Courtois Scientific Vanguard Fund program of the Faculty of Arts and Sciences and the Department of Computer Science and Operational Research (DIRO) of the Université de Montréal. SGN thanks Dr.Kuntal SenGupta for helpful discussions related to this work.
\bibliography{HP}

\appendix
\onecolumngrid
\section{Proof of Lemma \ref{spanlemma}}\label{Proof of Spanlemma}
\begin{proof}
From no restriction hypothesis we have transformation $T_{A_IA_O}:=\frac{\mathbb{I}_{A_IA_O}}{d_{A_O}}\in\mathbf{T}_{\mathcal{A}}$ as this transformation maps every state to $\frac{\mathbb{I}_{A_O}}{d_{A_O}}$. Now we prove the lemma via a contradiction and assume that there exists a orthogonal basis $H^{\alpha}_{A_IA_O}$ orthogonal to $\operatorname{Span}_{\mathbb{R}}\mathbf{T}_{\mathcal{A}}$ such that 
\begin{align*}
\operatorname{Span}_{\mathbb{R}}\left(\mathbf{T}_{\mathcal{A}}\cup_{\alpha}H^{\alpha}_{A_IA_O}\right)=\operatorname{Span}_{\mathbb{R}}\left\{\sigma^{\mu}_{A_I}\otimes\sigma^{\nu}_{A_O}\right\}_{\mu=0,\nu=0}^{l_{A_I}-1,l_{A_O}-1}    
\end{align*}
Note that the transformations $H^{\alpha}_{A_IA_O}$ though not valid can always be considered to be trace non-increasing transformations as we can always multiply a positive scalar to $H^{\alpha}_{A_IA_O}$ to make it trace non increasing while maintaining the orthogonality. We now define the hermitian operators 
\begin{align*}
H^{\alpha,\lambda}_{A_IA_O}=\lambda H^{\alpha}_{A_IA_O}+(1-\lambda)\frac{\mathbb{I}_{A_IA_O}}{d_{A_O}}~\mbox{with}~\lambda\in(0,1]   
\end{align*}
Note that since $\frac{\mathbb{I}_{A_IA_O}}{d_{A_O}}\in\mathbf{T}_{\mathcal{A}}$ we also have
\begin{align*}
\operatorname{Span}_{\mathbb{R}}\left(\mathbf{T}_{\mathcal{A}}\cup_{\alpha}H^{\alpha,\lambda}_{A_IA_O}\right)=\operatorname{Span}_{\mathbb{R}}\left\{\sigma^{\mu}_{A_I}\otimes\sigma^{\nu}_{A_O}\right\}_{\mu=0,\nu=0}^{l_{A_I}-1,l_{A_O}-1}    
\end{align*}
Note that there exists a sufficiently small $\lambda$ such that  we have 
\begin{align*}
\Tr_{A_I}\left[\left(\omega_{A_I}\otimes\mathbb{I}_{A_O}\right)H^{\alpha,\lambda}_{A_IA_O}\right]=\lambda\mathcal{H}^{\alpha}(\omega_{A_I})+(1-\lambda)\frac{\mathbb{I}_{A_O}}{d_{A_O}}\in\operatorname{Cone}(\Omega_{A_O})~~\forall\omega_{A_I}\in\Omega_{A_I}
\end{align*}
Here $\mathcal{H}^{\alpha}$ denote the transformations corrsponding to the Choi operators $H^{\alpha}_{A_IA_O}$. Moreover the transformations $H^{\alpha,\lambda}_{A_IA_O}$ are trace non-increasing. But according to the definition of these operators we have $H^{\alpha,\lambda}_{A_IA_O}\notin\mathbf{T}_{\mathcal{A}}$. Which is a contradiction as according to the no restriction hypothesis such operators must be in $\mathbf{T}_{\mathcal{A}}$.
\end{proof}

\section{Normalisation constraints for Hermitian Processes}\label{Normalisation constraints for Hermitian Processes}

Since the set $\mathbf{T}_{\mathcal{A}}$ and $\mathbf{T}_{\mathcal{B}}$ span the space of  hermitian operators we can impose the condition Eq.(\ref{normalisationW}) on arbitrary Hermitian operators that are trace preserving or deterministic. Note that for arbitrary hermitian operators $X_{A_IA_O}$ and $Y_{B_IB_O}$ the hermitian operators $X'_{A_IA_O}=X_{A_IA_O}-_{A_O}X_{A_IA_O}+\frac{\mathbb{I}_{A_IA_O}}{d_{A_O}}$ and $Y'_{B_IB_O}=Y_{B_IB_O}-_{B_O}Y_{B_IB_O}+\frac{\mathbb{I}_{B_IB_O}}{d_{B_O}}$ are always trace preserving. Thus we impose
\begin{align*}
\Tr\left[W\left(X'_{A_IA_O}\otimes Y'_{B_IB_O}\right)\right]=1~\forall~X_{A_IA_O},Y_{B_IB_O}    
\end{align*}
Substituting $X=Y=0$ we get
\begin{align*}
    \Tr\left[W\right]=d_{A_O}d_{B_O}
\end{align*}
If $Y=0$ and $X=0$, respectively, we have
\begin{align*}
    \Tr\left[W\left((X-_{A_O}X)\otimes\mathbb{I}\right)\right]=0~\forall~X
\end{align*}
\begin{align*}
    \Tr\left[W\left(\mathbb{I}\otimes(Y-_{A_O}Y)\right)\right]=0~\forall~Y
\end{align*}
And finally, we have
\begin{align*}
    \Tr\left[W\left((X-_{A_O}X)\otimes(Y-_{A_O}Y)\right)\right]=0~\forall~X,Y
\end{align*}
As the transformations of the form $_{\mathcal{S}}(\cdot)$ are self adjoint the above equalities yield

\begin{subequations}
\begin{align}
\Tr[\left(\Tr_{B_IB_O}W\right)X]=\Tr[_{A_O}\left(\Tr_{B_IB_O}W\right)X]~\forall~ X&\implies_{B_IB_O}W=_{A_OB_IB_O}W\nonumber\\
\Tr[\left(\Tr_{A_IA_O}W\right)Y]=\Tr[_{B_O}\left(\Tr_{A_IA_O}W\right)Y]~\forall~ Y&\implies_{A_IA_O}W=_{B_OA_IA_O}W\nonumber\\
\Tr[W\left(X\otimes Y\right)]=\Tr[\left(_{A_O}W+_{B_O}W-_{A_OB_O}W\right)\left(X\otimes Y\right)]~\forall~X,Y&\implies W=_{A_O}W+_{B_O}W-_{A_OB_O}W.\nonumber
\end{align}
\end{subequations}

\section{Constraints for Causal Hermitian Process}\label{Constraints for Causal Hermitian Process}

Firstly, we note that a single-party hermitian process satisfies
\begin{align}
    0\leq\Tr[W_{A_IA_O}T_{A_IA_O}]\leq1~ \forall~ T_{A_IA_O}\in \mathbf{T}_{\mathcal{A}}
\end{align}
along with the normalisation constraints
\begin{subequations}
\begin{align}
\Tr[W]=d_{A_O}\\
W=_{A_O}W
\end{align}
\end{subequations}

We now consider the bipartite hermitian process where Bob cannot signal to Alice, denoted as $W^{B\not\prec A}$.  
Every operation $T_{A_IA_O}$ performed by Alice induces a reduced sub process for Bob, which may depend functionally on $T_{A_IA_O}$, defined as
\begin{align}\label{r}
W_{B_IB_O}(T_{A_IA_O}) = \Tr_{A_IA_O}\!\left[W^{B\not\prec A}_{A_IA_OB_IB_O}(T_{A_IA_O}\otimes\mathbb{I}_{B_IB_O})\right].
\end{align}
Although $W_{B_IB_O}(T_{A_IA_O})$ is not, in general, a bonafide single-party hermitian process (since $T_{A_IA_O}$ need not be trace-preserving), it is nevertheless a linear functional of the form
\begin{subequations}
\begin{align}
    0\leq\Tr[W_{B_IB_O}(T_{A_IA_O})T_{B_IB_O}]\leq p(T_{A_IA_O})~ \forall~ T_{B_IB_O}\in \mathbf{T}_{\mathcal{B}}\\
    \Tr[W_{B_IB_O}(T_{A_IA_O})T_{B_IB_O}]=p(T_{A_IA_O})~ \forall~ T_{B_IB_O}\in \mathbf{T}^{D}_{\mathcal{B}}
\end{align}
\end{subequations}
The second equality in the above equation reflects the fact that Bob’s overall outcome probability coincides with that of Alice’s operation $T_{A_IA_O}$,as Bob cannot signal to Alice. Since, $p(T_{A_IA_O})$ just serves as a proportionality factor for normalization for the subprocess $W_{B_IB_O}(T_{A_IA_O})$, we have
\begin{subequations}
\begin{align}
\Tr\left[W_{B_IB_O}(T_{A_IA_O})\right] &= p(T_{A_IA_O})d_{B_O}, \quad\forall~T_{A_IA_O}\in\mathbf{T}_{\mathcal{A}}\label{n1}\\    
W_{B_IB_O}(T_{A_IA_O}) &=_{B_O}\!\left(W_{B_IB_O}(T_{A_IA_O})\right).\quad\forall~T_{A_IA_O}\mathbf{T}_{\mathcal{A}} \label{n2}
\end{align}
\end{subequations} 
Substituting Eq.~(\ref{r}) into the expressions above yields
\begin{subequations}
\begin{align}
&\Tr_{A_IA_OB_IB_O}\!\left[W^{B\not\prec A}_{A_IA_OB_IB_O}\!\left(T_{A_IA_O}\otimes\frac{1}{d_{B_O}}\mathbb{I}_{B_IB_O}\right)\right] = p(T_{A_IA_O}), \label{m1}\\
&\Tr_{A_IA_O}\!\left(W^{B\not\prec A}_{A_IA_OB_IB_O}(T_{A_IA_O}\otimes\mathbb{I}_{B_IB_O})\right) = 
\Tr_{A_IA_O}\!\left(({}_{B_O}W^{B\not\prec A}_{A_IA_OB_IB_O})(T_{A_IA_O}\otimes\mathbb{I}_{B_IB_O})\right). \label{m2}
\end{align}
\end{subequations}
Equation~(\ref{m1}) is trivially satisfied, since $\frac{1}{d_{B_O}}\mathbb{I}_{B_IB_O}$ corresponds to a trace-preserving operation.  
On the other hand, the requirement that Eq.~(\ref{m2}) holds for all $T_{A_IA_O}\in\mathbf{T}_{\mathcal{A}}$ implies
\begin{align*}
W^{B\not\prec A} = {}_{B_O}W^{B\not\prec A},
\end{align*}
which yields the first constraint in Eq.(\ref{cs0a}).  

Next, we derive the second constraint. Since Bob cannot signal to Alice, the reduced single-party process for Alice is independent of  the deterministic transformation applied by Bob and thus must be a constant process $W^{\mathrm{const}}_{A_IA_O}$, i.e.,
\begin{align*}
\Tr_{B_IB_O}\!\left[W^{B\not\prec A}_{A_IA_OB_IB_O}(\mathbb{I}_{A_IA_O}\otimes T_{B_IB_O})\right]
= W^{\mathrm{const}}_{A_IA_O}.\forall~ T_{B_IB_O}\in \mathbf{T}^{D}_{\mathcal{B}}
\end{align*}
Since this holds for all deterministic transformations of Bob, the reduced normalized process $W^{\mathrm{const}}_{A_IA_O}$ can be obtained by substituting $T_{B_IB_O}=\frac{\mathbb{I}_{B_IB_O}}{d_{B_O}}$, giving
\begin{align*}
W^{\mathrm{const}}_{A_IA_O}\otimes\mathbb{I}_{B_IB_O}
= \frac{1}{d_{B_O}}\Tr_{B_IB_O}\!\left[W^{B\not\prec A}_{A_IA_OB_IB_O}\right]\otimes\mathbb{I}_{B_IB_O}
= d_{B_I}\left({}_{B_IB_O}W^{B\not\prec A}_{A_IA_OB_IB_O}\right).
\end{align*}
Since $W^{\mathrm{const}}_{A_IA_O}$ is a valid single-party hermitian process, we have
\begin{align*}
W^{\mathrm{const}}_{A_IA_O}=_{A_O}W^{\mathrm{const}}_{A_IA_O}\implies_{B_IB_O}W^{B\not\prec A} = {}_{A_OB_IB_O}W^{B\not\prec A}.
\end{align*}
The other normalization constraint $\Tr[W^{\mathrm{const}}_{A_IA_O}]=d_{A_O}$ follows directly from the normalization of the bipartite process $W^{B\not\prec A}$.  
This completes the derivation of Eq.(\ref{cs0b}).  
The analogous relations for $W^{A\not\prec B}$ follow straightforwardly.  
When Alice and Bob are spacelike separated, the combined conditions for $W^{B\not\prec A}$ and $W^{A\not\prec B}$ imply
\begin{align*}
W^{A\nprec\nsucc B} = {}_{A_OB_O}W^{A\nprec\nsucc B}.
\end{align*}

\section{Proof of Theorem \ref{GCPTheorem}}\label{Proof of GCPTheorem}

\begin{proof}
We start by noting that for any quantum or post quantum theory the process $W$ satisfies $W=_{A_O}W+_{B_O}W-_{A_OB_O}W$ (From Eq.(\ref{c4}).It is easy to see that $_{A_O}W$ satisfies the normalisation constraints in Eq.(\ref{Processnormalisation}). We now show that $_{A_O}W$ also yields consistent probabilities and hence is a valid Hermitian Process. Firstly we can always write the action of $T_{A_IA_O}\in\mathbf{T}_{\mathcal{A}}$ on $\omega_{A_I}\in\Omega_{A_I}$ as
\begin{align}
\Tr_{A_I}\left[\left(\omega_{A_I}\otimes\mathbb{I}_{A_O}\right)T_{A_IA_O}\right]=p\omega_{A_O}
\end{align}
Where $0\leq p\leq1$ and $\omega_{A_O}\in\Omega_{A_O}$. Thus we also have
\begin{align}
\Tr_{A_I}\left[\left(\omega_{A_I}\otimes\mathbb{I}_{A_O}\right)_{A_O}T_{A_IA_O}\right]=p_{A_O}\omega_{A_O}=p\frac{\mathbb{I}_{A_O}}{d_{A_O}}
\end{align}
Thus $T_{A_IA_O}\in\mathbf{T}_{\mathcal{A}}\implies _{A_O}T_{A_IA_O}\in\mathbf{T}_{\mathcal{A}}$.Since $W$ is a valid hermitian process, we have
\begin{align}
    \Tr[_{A_O}W\left(T_{A_IA_O}\otimes T_{B_IB_O}\right)]= \Tr[W\left(_{A_O}T_{A_IA_O}\otimes T_{B_IB_O}\right)]\geq 0 ~\forall ~T_{A_IA_O}\in \mathbf{T}_{\mathcal{A}},T_{B_IB_O}\in\mathbf{T}_{\mathcal{B}}
\end{align}
Thus $_{A_O}W$ is a valid hermitian process. It is also easy to verify that $_{A_O}W$ also satisfies Eq.(\ref{cs0b}) and can only lead to correlations where Alice cannot signal to Bob. Similary $_{B_O}W$ leads to correlations where Bob cannot signal to Alice. Finally $_{A_OB_O}W$ always results in no-signalling correlations. Thus any hermitian correlation $\{p(ab|xy)\}$ can be written as
\begin{align}
p(ab|xy)&=\Tr\left[W\left(T^{a|x}_{A_IA_O}\otimes T^{b|y}_{B_IB_O}\right)\right]=\Tr\left[\left(_{A_O}W+_{B_O}W-_{A_OB_O}W\right)\left(T^{a|x}_{A_IA_O}\otimes T^{b|y}_{B_IB_O}\right)\right]\nonumber\\
&=p^{A\not \prec B}(ab|xy)+p^{B\not \prec A}(ab|xy)-p^{A \not \prec \not \succ B}(ab|xy)~\forall~a,b,x,y
\end{align}
\end{proof}

\end{document}